\newcommand{\suc}{succ}
\newcommand{\pre}{pred}
\newcommand{\con}{orth}
\newcommand{\bconf}{borth}
\newtheorem{theorem}{Theorem}[section]
\newtheorem{lemma}[theorem]{Lemma}
\newenvironment{proof}[1][Proof]{\begin{trivlist}
\item[\hskip \labelsep {\bfseries #1}]}{\end{trivlist}}
\newenvironment{definition}[1][Definition]{\begin{trivlist}
\item[\hskip \labelsep {\bfseries #1}]}{\end{trivlist}}
\title{Computing the Optimal Longest Queue Length in Torus Networks}
\author{Mehrdad Alisagari\footnote{Department of Computer Science, California State University Long Beach: mehrdad.aliasgari@csulb.edu},
 Burkhard Englert\footnote{Department of Computer Science, University of North Caroline Wilmington: burkharde@gmail.com},
Oscar Morales-Ponce\footnote{Department of Computer Science, California State University Long Beach: oscar.morales-ponce@csulb.edu}}
\date{}
\begin{document}       

\maketitle

\begin{abstract}
A collection of $k$ mobile agents is arbitrarily deployed in the edges of a directed torus network where agents perpetually move to the successor edge. Each node has a switch that allows one agent of the two incoming edges to pass to its successor edge in every round. The goal is to obtain a switch scheduling to reach and maintain a configuration where the longest queue length is minimum. We consider a synchronous system. We use the concept of conflict graphs to model the local conflicts that occur with incident links. We show that there does not exist an algorithm that can reduce the number of agents in any conflict cycle of the conflict graph providing that all the links have at least 2 agents at every round. Hence, the lower bound is at least the average queue length of the conflict cycle with the maximum average queue length. Next, we present a centralized algorithm that computes a strategy in $O(n\log n)$ time for each round that attains the optimal queue length in $O(\sigma n)$ rounds where $n$ is the number of nodes in the network and $\sigma$ is the standard deviations of the queue lengths in the initial setting. Our technique is based on network flooding on conflict graphs. Next, we consider a distributed system where nodes have access to the length of their queues and use communication to self-coordinate with nearby nodes.  We present a local algorithm using only the information of the queue lengths at distance two. We show that the algorithm attains the optimal queue length in $O(\sigma C_{max}^2)$ rounds where $C_{max}$ is the length of the longest conflict cycle with the maximum average queue length.  
\end{abstract}

\section{Introduction}

Protecting a region with a collection of agents has recently attracted the attention of the distributed computing community.  The goal of theses problems is to understand the interaction of the agents, provide lower bounds that any algorithm can attain as well as providing strategies that guarantee optimal or approximation upper bounds. In this paper, we study a variant of the problem where agents are arbitrarily deployed in a graph. Agents perpetually move along a given cycle. However, agents on different cycles may collide after crossing conflicting nodes. To avoid collisions, a switch at each node coordinates the agents. The goal is to reach and maintain a configuration where the longest queue length is minimum. However,  the use of switches can result in unbalancing the number of agents in each link. Is it possible to provide a switch schedule that guarantees an even distribution of the agents? To our knowledge, there are no formal studies that provide lower and tight upper bounds. In this paper, we address a deterministic version of the problem in symmetric networks. Specifically, we study the problem in torus networks of dimension $\sqrt{n} \times \sqrt{n}$, i.e., the torus has $\sqrt{n}$ horizontal and $\sqrt{n}$ vertical cycles or rings. Initially, a collection of $k$ agents is arbitrarily deployed in the rings such that each ring has $k\sqrt{n}$ agents. To simplify the presentation, we refer to the links as queues to denote the order of the agents. We consider a synchronous system where each node is crossed by two agents in each round.  The natural question is if it is always possible to reach a deployment where all the links have the same number of agents, i.e., queue length $k$. We can answer the question by determining the minimum longest queue length in each link that can be reached. To reduce the queue length, the agent in front of the queue can cross to the successor link meanwhile, the agent in front of the predecessor queue does not cross. However, this may increase the queue length of its incident links. We call the links that are negatively affected by the scheduling conflict links. Further, the conflicts can extend into conflict paths, which in turn can form conflict cycles.  We model these conflicts using \emph{conflict graphs}.   

We show that there does not exist an algorithm that can reduce the number of agents in any conflict cycle if the links have at least 2 agents at  every round. Hence, the optimal lower bound is at least the maximum average length among all the conflict cycles in the network. Next, we present a centralized algorithm for directed torus networks that computes a strategy in $O(n\log n)$ time that minimizes the longest queue length in $O(\sigma n)$ rounds where $n$ is the size of the network and $\sigma$ is the standard deviation of the queue lengths in the initial deployment. The algorithm uses the conflict graph to apply a technique called flooding, used in routing algorithms, to minimize the longest queue length. Next, we present a local algorithm that uses the queue length of links at a distance two to minimize the longest queue length in at most $O(\sigma C_{max}^2)$ rounds where $C_{max}$ is the length of the longest conflict cycle with maximum average queue length in the initial deployment. We observe that our results can be applied to non-deterministic settings and general topologies.



\subsubsection{Organization of the paper}\label{sec:contribution}


We present the formal model and the problem statement in Section~\ref{sec:model}.  The related work is presented in Section~\ref{sec:relatedwork}. The concept of conflict graph is presented in Section~\ref{sec:traffic} and  the lower bound in Section~\ref{sec:lower}. In Section~\ref{sec:minmax}, we present the centralized algorithm meanwhile, the local algorithm is presented in Section~\ref{sec:localalgo}. We conclude the paper in Section~\ref{sec:conclusion}. Due to space constraints, some proofs are left.

\section{Model}\label{sec:model}

Let $G=(V,E)$ denote a graph where $V$ is the set of vertices (or nodes) and $E= \{\{u,v\} : u,v \in V\}$ is the set of links. An orientation of $G$ is the directed graph obtained by assigning to each link a direction.  Let  $(u,v) \in \overrightarrow{E}$ be the directed link with source at $u$ and destination at $v$. We consider a torus graph of dimension $m\times m$ which consists of $m$ horizontal and  $m$ vertical rings. More specifically,  a torus network of dimension $m \times m$ is the graph $G=(V, E)$ where  $V=\{v_{i,j} |  \forall i \in [0, m-1] \wedge \forall  j \in [0,m-1]\}$ is the set of vertices and $E =\{\{v_{i,j},v_{k, l}\} | (k = i+1 \bmod m \wedge l = j) \vee (k = i \wedge l = j+1 \bmod m) \}$ is the set of links. Thus, the number of vertices is $n= m^2$ and the number of links is $2n$. We assume that each link has a queue where agents wait. Throughout the paper, we refer to  the ring formed with links $\{v_{i,j},v_{i, (j+1) \bmod m}\}$ for all $j\in[0,m-1]$ as the horizontal ring $i$. Similarly, we refer to the ring formed with links $\{v_{i,j},v_{(i+1) \bmod m,j}\}$ for all $i\in[0,m-1]$ as the vertical ring $j$. Observe that each vertex $v_{i,j}$ has two incoming and two outgoing links. For each  $v$, let $\overrightarrow{E}^{in}(v)$ denote the set of incoming links to $v$ and $\overrightarrow{E}^{out}(v)$ denote the set of outgoing links from $v$. Given a link $e =(u,v)$, let $\suc(e)$, $\pre(e)$ denote the successor and predecessor link of $e$ as defined by the direction of the ring connecting $u$ and $v$. To simplify our presentation, we consider $m$ to be an even number and  a torus where the direction of the rings alternate. However, our solutions and analysis also work in tori where the rings have any arbitrarily direction.

Let  $\Omega$ be the set of agents arbitrarily deployed in the link queues of the torus. 
We divide the time in rounds and in every round,  a semaphore in each vertex $v$ allows two agents  crossing to  their successor links, i.e., let $e, e' \in \overrightarrow{E}^{in}(v)$, the semaphore either allows an agent in $e$ and $e'$ moving to $\suc(e)$ and $\suc(e')$, respectively, or two agents in $e$ moving to $\suc(e)$.  Observe that each agent infinitely often traverses its initial ring. Therefore,  the number of agents in each ring remains the same at all times.  We refer to the crossing schedule in each vertex as the green time assignment. Let $g_e(r)$ denote the green time of $e$ at round $r$. The green time assignment has the constraint  that $g_e(r) + g_{e'}(r) = 2$ where $e$ and $e'$ are in $\overrightarrow{E}^{in}(v)$. There are two ways to reduce the queue length of $e=(u,v)$ in one round:
\begin{compactenum}
\item Allowing two agents in front of $e$ moving to $\suc(e)$. However, the queue length  at the other incoming link incident to $v$, i.e.,  $(u', v) = \con(e)$ as well as the queue length of $\suc(e)$  increase.  Thus, $e$ is in forward conflict with $\suc(e)$  and  $\con(e)$.

\item Preventing the agents  in front of $\pre(e)$ to move to $e$. However, the queue length of the other outgoing link from $u$, i.e., $(u, w) =\bconf(e)$ as well as  the queue length of $\pre(e)$  increase. Thus, $e$ is in backward conflict with $\pre(e)$ and $\bconf(e)$.
\end{compactenum}

Let $w_e(r)$ be the number of agents in $e$ at round $r$. We define the agents deployment at time $r$ as $W(r) = \{ w_e(r) : \forall e \in E\}$. Throughout the paper, we usually omit the time reference if it is clear from the context.

\begin{definition} \label{def:dynamics}
Given an agent deployment $W(r) = \{ w_e(r) : \forall e \in E\}$ and a set $A(r) = \{ g_e(r) : \forall e \in E\}$ of green time assignments,  the deployment of $e \in E$ at round $r+1$ is given by:  

$$
w_e(r+1)  :=  w_e(r) - \min(g_e(r), w_e(r))   +  \min(g_{\pre(e)}(r), w_{\pre(e)}(r)).
$$

\end{definition}

We say that the  network is saturated if  $w_e(r)  \geq 2$ for all $e\in E$. Therefore, we can simplify the model in saturated torus networks and rewrite it as: $$w_e(r+1) :=  w_e(r) - g_e(r) + g_{\pre(e)}(r).$$

{\bf Problem Statement:}
Given an initial agent deployment $W$ in an oriented torus $G=(V, E)$, the overall objective is to determine the green time assignments such that the longest queue is minimum after a finite number of rounds. Let $\mathcal{A}(r)$ denote the set of green time assignments at round $r$.  Formally,

\bf{Problem:}
\normalfont 
Determine the green time assignments $\mathcal{A}(0), \mathcal{A}(1), ...,$ $\mathcal{A}(r), ...$ such that there exists a time $r$ when
the longest queue length overall $e \in E$ is  minimum thereafter.  Let $\phi$ denote the minimum longest queue length.

\section{Related Work}\label{sec:relatedwork}

\emph{Kortsarz} and \emph{Peleg}~\cite{kortsarz1994traffic} studied the traffic-light scheduling for route scheduling on a two-dimensional grid. 
They show upper bounds according to the number of directions. unlike our model where each vertex always allows one agent crossing (in 
either direction), edges can be activated and deactivated at any round. Moreover, in our setting agents infinitely often traverse the grid in one direction.
Our problem is closely related to the packet switching networks where switches can forward one packet at the time. The authors in \cite{davies1972control}  propose isarithmic networks where a constant number of agents (packets) in the overall network is maintained. When a packet arrives at its destination, it is replaced with new payload before putting back into the system. In our model, the number of agents remains constant at every time. Agents can be used to transmit data. Therefore, our model can be considered an isarithmic network and our results can be applied to these networks.

Our problem is also related to the routing problem in high-performance computers where cores are interconnected with different topologies such as torus networks.  Cores then communicate sending messages which are routed through the network.  Its performance and efficiency largely depend on routing techniques. Many authors have proposed different routing techniques that deal with the throughput, e.g.,~\cite{ebrahimi2011agent,ebrahimi2012mafa,farahnakian2011q,jog2013owl}. Despite providing optimal routes with respect with some metric, it is not clear how concurrent routes affect the system. In other words, how the system capacity is affected with a high volume of concurrent routes. Although we do not deal directly with routing, our problem provides the minimum longest queue that can be used as a primary metric for the system.

\section{Traffic Network Flows}\label{sec:traffic}

In this section, we introduce the concept of the conflict graph of an oriented torus $G=(V,\overrightarrow{E})$ that allows us applying flooding techniques to $G$ in a natural way. Next, we introduce the green time shifts that we use to define the shift assignments. Finally, we introduce conflict cycles in the conflict graphs.

Intuitively, in the conflict graph, the links of the torus are the set of vertices and two vertices $u,v$ are adjacent if reducing the queue length of $u$ in the direction of $v$ increases the queue length of $v$.

\begin{figure}[ht!]
\centering
\includegraphics[scale=.4]{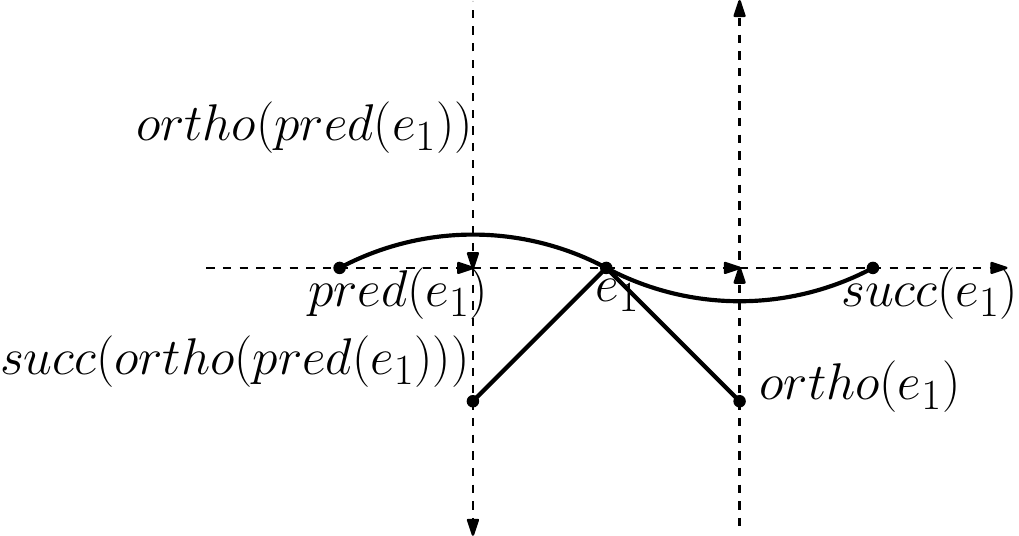}
\caption{Conflict links (Dotted lines denote the links of the torus and solid lines the links of the conflict the conflict graph.)}
\label{fig:conflictGraph}
\end{figure}

\begin{definition}\emph{(Conflict Graph)}\label{def:conflict}
Given an oriented  torus $G=(V,\overrightarrow{E})$ the conflict graph of $G$, is $L(G)=(E,\overrightarrow{E}')$; see Figure~\ref{fig:conflictGraph}, where for any $e_1, e_2 \in E$ there exists an edge $(e_1, e_2) \in \overrightarrow{E}'$ if and only if one of the following conditions holds:
\begin{compactenum}
\item $e_2 = \suc(e_1)$ (Equivalent $e_1 = \pre(e_2)$).
\item $\exists v  : \{e_1, e_2\} \subseteq \overrightarrow{E}^{in}(v)$; i.e., $e_2 = \con(e_1)$.
\item $\exists v  : \{e_1, e_2\} \subseteq \overrightarrow{E}^{out}(v)$; i.e., $e_2 = \bconf(e_1)))$.
\end{compactenum}
\end{definition}
 
The definition of conflict graphs can be  extended to other  digraphs.  For a fixed link $e \in \overrightarrow{E}$, we denote the  set of forward conflict neighbors of $e$ as $N^+(e) = \{\suc(e),  \con(e)\}$ and the set of backward conflict neighbors as $N^-(e) = \{\pre(e),  \bconf(e)\}$.  

For each link $e$, let $g_e  = [0, 2]$ be the green time shift subject to 
\begin{equation} \label{eq:constraint}
g_e + g_{\con(e)} = 2.
\end{equation}
Let $A(r)$ be the set of green time shifts.

\begin{definition}\emph{(Traffic Network)} \label{def:network}
Let $L(G)$ be the conflict graph  of $G=(V,\overrightarrow{E})$ with an agent deployment $W(r)$ at round $r$. A  network is defined by the tuple $TN(r) = (L(G), W(r), A(r))$. At the beginning of every round, $s_e(r) = 0$ for all $e \in \overrightarrow{E}$.
\end{definition}

Throughout the paper, we refer to the network as $TN = (L(G), W, A)$  if it is clear from the context.

For every link $e=(u,v) \in \overrightarrow{E}$, we can reduce its queue length by setting $g_e = 2$. However, from Constraint \ref{eq:constraint}, $g_{\con(e)} = 0$. We call this operation forward flow.  Similarly, we can reduce its queue length by setting $g_{\pre(e)} = 0$. However, from Constraint~\ref{eq:constraint}, $g_{\bconf(e)} = 2$. We call this operation backward flow. The following definition formalizes the flows.

\begin{definition}\emph{(Modified flows in Traffic Network)} \label{def:flow}
The forward flow  change operation $f^+_e$ sets  $g_e  = 2$ and $g_{\con(e)} = 0$. Analogously, the backward flow change operation $f^-_e$ sets  $g_{\pre(e)}  = 0 \mbox{ and }g_{\con(\pre(e))} = 2.$
\end{definition}

\begin{figure}[ht!]
\centering
\includegraphics[scale=0.6]{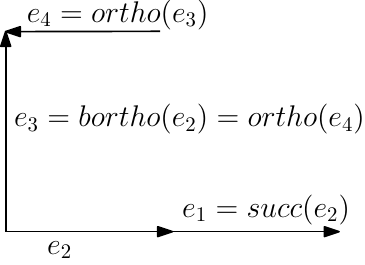}
\caption{A conflict path.}
\label{fig:conflictpath}
\end{figure}

Next, we introduce the concept of conflict path $P$ in $TN$; refer to Figure~\ref{fig:conflictpath}. We say that a path $P=\{e_{1}, e_{2},...,e_{l}\}$ in $TN$ is a conflict path if 
either  $|P| = 2$ and $e_1$ and $e_2$ are in conflict or  $|P| \geq 2$ and for  every three consecutive links $e_j,e_{j+1}, e_{j+2} \in P$, exactly one of the following statement follows:
\begin{compactenum}
\item   if $e_{j+1} = \con(e_j)$ or $e_{j+1} = \pre(e_j)$, \\then $e_{j+2} \in N^-(e_{j+1})$
\item   if $e_{j+1} = \suc(e_j)$ or $e_{j+1} = \bconf(e_j)$,\\then $e_{j+2} \in N^+(e_{j+1})$
\end{compactenum}

\begin{figure}[ht!]
\centering
\includegraphics[scale=0.9]{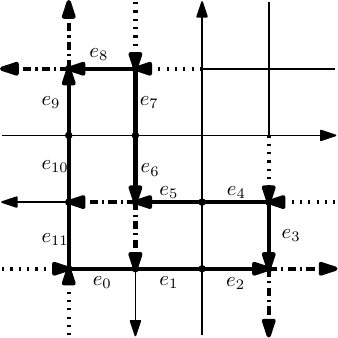}
\caption{A conflict cycle is represented with bold solid links, dotted arrows are entry links and dashed dotted arrows are exit links. }
\label{fig:conflictcycle}
\end{figure}


We define a conflict cycle $C$ of length $l$ to be a closed conflict path, i.e., $C=\{e_{0}, e_{1},...,e_{l-1}, e_{0}\}$; see Figure~\ref{fig:conflictcycle}.


We say that a  link  $e \in \overrightarrow{E}$  is an \emph{exit link} if $e$ is in the conflict cycle $C$ but $\suc(e)$ is not in $C$. Similarly,  a link $e \in \overrightarrow{E}$  is an \emph{entry link} if $e$ is not in $C$, but $\suc(e)$ is in $C$. A vertex $v \in C$  is an \emph{exit cycle} if both links in $\overrightarrow{E}^{out}(v)$ are exit links. A vertex $v \in C$  is an \emph{entry cycle} if both links in $\overrightarrow{E}^{in}(v)$ are entry links. Let $V^-(C)$ and $V^+(C)$ denote the set of exit cycles and the set of entry cycles in $C$, respectively.

\begin{lemma}\label{lem:equal}
In any conflict cycle $C$, the number of entry vertices is equal to the number of exit vertices, i.e., $|V^+(C)| = |V^-(C)|$.
\end{lemma}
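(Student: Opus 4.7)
The plan is a degree-sum identity at each vertex combined with a parity constraint forced by the conflict-path rules.

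For each vertex $v$, let $i(v)=|\{e\in E^{in}(v):e\in C\}|$ and $o(v)=|\{e\in E^{out}(v):e\in C\}|$; both lie in $\{0,1,2\}$. Each link of $C$ contributes $1$ to $i(\cdot)$ at its head and $1$ to $o(\cdot)$ at its tail, so $\sum_v i(v)=|C|=\sum_v o(v)$, hence $\sum_v\bigl(i(v)-o(v)\bigr)=0$. The goal is then to show that the only non-zero contributions come from entry and exit vertices, which carry $-2$ and $+2$ respectively.

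Next I would argue that $i(v)+o(v)$ is always even. The key observation, extracted from a case analysis of rules (1)--(4), is that each link $e\in C$ has its two $C$-neighbors distributed one per endpoint of $e$: transitions of type $\suc$ or $\bconf$ enter $e$ at its tail and force the next transition into $N^+(e)$ (which touches the head of $e$), while transitions of type $\con$ or $\pre$ enter at the head and force the next one into $N^-(e)$ (which touches the tail). Hence at every vertex $v$ the cycle-edges of $C$ incident to $v$ form a perfect matching on the set of $C$-links incident to $v$, and in particular $i(v)+o(v)$ is even.

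Combined with $i(v),o(v)\in\{0,1,2\}$, the only admissible pairs are $(0,0),(1,1),(2,2),(0,2)$, and $(2,0)$. By definition, entry vertices have $(i(v),o(v))=(0,2)$ and exit vertices have $(i(v),o(v))=(2,0)$; these contribute $-2$ and $+2$ respectively, while the other three pairs contribute $0$. Substituting into $\sum_v(i(v)-o(v))=0$ then yields $2|V^-(C)|-2|V^+(C)|=0$, i.e.\ $|V^+(C)|=|V^-(C)|$. The main obstacle will be the parity step: one must carefully verify via the case analysis on the four transition types that the two $C$-neighbors of every link of $C$ genuinely lie at different endpoints, after which the rest is a routine degree-sum count.
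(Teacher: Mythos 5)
Your proof is correct, but it takes a genuinely different route from the paper's. The paper decomposes $C$ into its maximal same-road segments and argues, using the fact that each road carries a single consistent direction, that every such segment is a directed path whose source is an entry vertex and whose sink is an exit vertex (road switching can only happen via $\con$ or $\bconf$, which force the two cycle links at a switching vertex to be both incoming or both outgoing); since each entry/exit vertex bounds exactly two segments, the counts agree, with the degenerate case of a cycle that is an entire ring ($|V^+(C)|=|V^-(C)|=0$) handled separately. You instead run a global handshake count $\sum_v\bigl(i(v)-o(v)\bigr)=0$ and reduce everything to the parity claim that $i(v)+o(v)$ is even, which you correctly extract from the transition rules: each link of $C$ is entered at one endpoint and left at the other, so the $C$-links at any fixed vertex are paired off by cycle adjacency. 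Your version needs no case split for ring-only cycles, does not use the consistent orientation of the roads (so it would survive arbitrary link orientations, and even non-simple conflict cycles where $(i(v),o(v))=(2,2)$ occurs), while the paper's segment decomposition yields extra structure that is reused later, e.g.\ in the conservation argument of Theorem~\ref{thm:low1}. One caveat: you read ``exit vertex'' as $(i(v),o(v))=(2,0)$, i.e.\ both \emph{incoming} links of $v$ are exit links, whereas the paper's definition literally requires both links of $E^{out}(v)$ to be exit links, which would force $o(v)=2$. Your reading is the only one consistent with the paper's own proof and with the lemma's later use, so this is a typo in the paper's definition rather than a gap in your argument, but it deserves a remark.
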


\begin{proof}
Let $C=\{e_0, e_1, ..., e_{|C|-1}, e_0\}$. If $|V^-(C)| = 0$, then $C$ is either a vertical or a horizontal ring since for each two links $e_i, e_{i+1}$ in $C$, it holds that either $e_i = \suc(e_{i+1})$ or $e_i = \pre(e_{i+1})$. Therefore, $|V^+(C)|=0$.

Assume then that $|V^-(C)| > 0$. Let $R$ be the links of any ring $R$ of the torus. Let $P$ be any consecutive conflict path of $C$ such that $P$ is in $R$.  Let $u,v$ be the vertices at the endpoints of $P$. We show that $u$ is an entry vertex and $v$ is an exit vertex. Indeed, if that were not the case, $P$ would have distinct directions. Thus, it contradicts the fact that each ring has a different direction. The lemma follows since each section of $C$ in a ring is delimited by one entry vertex and one exit vertex. 
\end{proof}

 \section{Lower Bound}\label{sec:lower}

In this section, we present the lower bound on the minimum longest queue that any algorithm can attain in saturated networks. Recall that a saturated network is a $TN$ where $w_e(r) \geq 3$ for each $e \in TN$.

Observe that a conflict cycle is,  indeed, equivalent to a deadlock in the sense that reducing the number of agents in any queue of a cycle increases the queue length of another link in the cycle. 

\begin{lemma}
\label{lemma:constant}
Let $TN =(L(G), W, A)$ be a saturated  network and let $\mathcal{C}$ be the set of all conflict cycles in $L(G)$.  Then, the number of agents in each conflict cycle $C \in \mathcal{C}$ remains constant for all rounds.
\end{lemma}
\begin{proof}
Consider any green time assignment $A$. Since $TN$ is saturated, $w_e(r) \geq 3$ for all links $e \in TN$. Let us consider  any conflict cycle $C \in \mathcal{C}$. We show that the number of agents in $C$ remains constant in the next round, i.e., $$\sum_{e \in C} w_e(r) = \sum_{e \in C} w_e(r+1).$$ Let $P^i = e^i_1, e^i_{2}, ...,e^i_{l_i}$ be the segment of $l_i$ contiguous  links in the same ring of $C$. Observe that $C$ is formed by the ring segments $P^i$ that alternate between horizontal and vertical rings. Let us consider $P^i$ and let $W(P^i, r+1)$ be the number of agents at round $r+1$ in $P^i$. 
Then, 
$$
\begin{array}{rcl}
W(P^i, r+1) &=& \sum_{j=1}^{|P^i|} \left(w_{e^i_{j}} - g_{e^i_{j}} + g_{\pre(e^i_{j})} \right) \\
&= & \left(\sum_{j=1}^{|P^i|} w_{e^i_{j}}\right) + g_{\pre(e^i_1)} -  g_{e^i_{|P^i|}}.
\end{array}
$$
Thus, the number of agents in $P^i$ changes  only due to the entry and exit cycles.

We can calculate the number of agents in $C$ by asumming at least two agents in each link.  We show   $$\sum_{i}W(P^i, r) = \sum_{i}W(P^i, r+1).$$ Observe that $$\sum_{i}W(P^i, r+1)  = \sum_{i}\left(\left(\sum_{j=1}^{|P^i|} w_{e^i_{j}} \right)+ g_{\pre(e^i_1)} -  g_{e^i_{|P^i|}}\right),$$ consider $P^i$ and $P^{i+1}$. Two cases can occur:

\begin{compactitem}
\item $P^i$  and $P^{i+1}$ have incoming conflict links. W.l.o.g. assume that $e^{i+1}_{1} = \con(e^{i}_{|P^i|})$. Therefore, the number of  agents that leave the cycle $C$ is 
$g_{e^{i+1}_{1}}  +  g_{e^{i}_{|P^i|}} = 1.$

\item $P^i$ and $P^{i+1}$ have outgoing conflict links. W.l.o.g. assume that $e^{i+1}_{|P^{i+1}|} = \bconf(e^{i}_{1})$. Therefore, the number of  agents that enter the cycle $C$  is 
$g_{\pre({e^{i}_{|P^{i}|})}} + g_{\bconf(e^{i+1}_{1})} = 1.$

\end{compactitem}

Thus, the number of agents in $C$ is:
$$
\sum_{i}W(P^i, r+1)   = 
 \sum_{i}\sum_{l=1}^{|P^i|} w_{e^i_{l}} + |V^{in(C)}| - |V^{out(C)}|
$$
Further, from Lemma~\ref{lem:equal}, $|V^{in(C)}|  =  |V^{out(C)}|$. Therefore, $$\sum_{i}W(P^i, r) = \sum_{i}W(P^i, r+1).$$

\end{proof}

From Lemma~\ref{lemma:constant}, the best that any algorithm can do is to uniformly distribute the agents along the conflict cycle with the maximum average of agents as shown in the following theorem.

\begin{theorem}\label{thm:low1}
Let $TN =(L(G), W, S)$ be a saturated  network. Then the longest queue length $\phi$ that any greedy algorithm can attain is at least $$\phi \geq \max _{C \in \mathcal{C}} \left(\left \lceil \frac{{\sum_{e \in C}}w_e(r) }{|C|}\right \rceil  \right)$$ where $\mathcal{C}$ is the set of all conflict cycles in $L(G)$.
\end{theorem}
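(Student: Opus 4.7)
The plan is to fix an arbitrary conflict cycle $C \in \mathcal{C}$ and show that, in a saturated network, the total number of agents on the edges of $C$ is conserved from round $r$ to round $r+1$: $\sum_{e \in C} w_e(r+1) = \sum_{e \in C} w_e(r)$ under any green time assignment. Once this invariant is in hand, the theorem will follow immediately by pigeonhole on the $|C|$ links of the cycle, since the maximum of $|C|$ nonnegative integers with fixed sum is at least the ceiling of their average.

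To establish conservation, I would expand the saturated store-and-forward update $w_e(r+1) = w_e(r) - g_e + g_{\pre(e)}$ and sum over $e \in C$. Contributions from edges whose predecessor is also in $C$ cancel in pairs, so the net change collapses to the boundary term
\[
\sum_{e' \text{ entry}} g_{e'} \;-\; \sum_{e \text{ exit}} g_e,
\]
where entry links are those $e' \notin C$ with $\suc(e') \in C$ and exit links are those $e \in C$ with $\suc(e) \notin C$. I would then regroup these green times by the vertex at which each incoming link is budgeted. In the saturated regime $\sum_{e \in E^{in}(v)} g_e = \psi$ at every vertex, so each entry vertex in $V^+(C)$ contributes exactly $+\psi$ to the inflow (both its incoming edges are entry links) and, by a dual accounting at the heads of the outgoing exit links of an exit vertex, each exit vertex in $V^-(C)$ contributes exactly $-\psi$ to the outflow. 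Invoking Lemma~\ref{lem:equal}, which gives $|V^+(C)| = |V^-(C)|$, inflow and outflow both equal $\psi\,|V^+(C)|$, so the total is preserved.

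The main obstacle is verifying that no vertex on $C$ outside $V^+(C) \cup V^-(C)$ makes a nonzero net contribution, i.e., that at every such ``through'' vertex the incoming entry and exit green times cancel for every green split. I expect this to follow from the structural rules in the definition of a conflict path: the four allowed transition patterns on $\suc$, $\pre$, $\con$, and $\bconf$ force the cycle to pass through a non-extreme vertex by pairing each incoming entry link with a matching outgoing exit link at the same vertex, producing local cancellation independent of the green times.

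Once conservation is established, the proof concludes by pigeonhole: some $e^\star \in C$ satisfies $w_{e^\star}(r+1) \geq \lceil \sum_{e \in C} w_e(r+1)/|C|\rceil = \lceil \sum_{e \in C} w_e(r)/|C|\rceil$, so $\phi(r+1) \geq \lceil \sum_{e \in C} w_e(r)/|C|\rceil$, and taking the maximum over $C \in \mathcal{C}$ yields the claimed lower bound.
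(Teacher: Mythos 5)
Your proposal is correct and follows essentially the same route as the paper's proof: both establish conservation of the total number of agents on each conflict cycle by telescoping the saturated store-and-forward update along the road segments of the cycle, reduce the residual to boundary green times that sum to $\psi$ per entry vertex and $-\psi$ per exit vertex, cancel these via Lemma~\ref{lem:equal}, and finish by pigeonhole. The ``through vertex'' issue you flag as the main obstacle is handled in the paper implicitly by the segment-wise telescoping (at such a vertex the cycle contains a link and its successor on the same road, so no entry or exit link is incident there and the terms cancel exactly as you anticipate).
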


\begin{proof} 
The theorem follows from Lemma~\ref{lemma:constant} and the pigeon hole principle.
\end{proof}

In the sequel, we say that $C$ is a critical conflict cycle if $\phi  = \left\lceil \frac{\sum_{ e \in C} w_e}{|C|}  \right\rceil$.

\section{Minimizing the Longest Queue Length in Saturated Traffic Networks} \label{sec:minmax}

In this section, we present a strategy that minimizes the maximum queue length in saturated networks. The main idea of the algorithm is based on flooding in distributed computing using Definition~\ref{def:flow}. We start the flooding in a queue with the longest queue length $\phi'$ which provokes that the conflict links get flooded. We continue flooding the conflict links until all the queue lengths in the flooding paths are less than $\phi'$. 

Given a network $TN= (L(G), W, A)$, an  $i$-conflict path $e_1,e_2, ..., e_a$ of $TN$ is  a conflict path  where $w_{e_1} = i$ and  $w_{e_j} = i-1$ for all $j \in [2, a]$. An $i$-conflict path $e=e_1,e_2, ..., e_a$ is \emph{forward} $i$-conflict path if $e_2 \in N^+(e)$. Otherwise, is a \emph{backward} $i$-conflict path.  We say that $e'$ is the \emph{forward $i$-conflict adjacent}  of $e$ if there exists a conflict path $e=e_1,e_2, .., e_a =e'$ such that  $w_{e_1} = w_{e_a} = i$ and $w_{e_j} =  i-1$ for $j \in [2,a-1]$; see Figure~\ref{fig:conflictadjacent}. Analogously, we define the \emph{backward $i$-conflict adjacent}. Observe that it is always possible to reduce the queue length by flooding the $i$-conflict path. On the other hand, the path between two links that are $i$-conflict adjacent cannot be reduced.

\begin{figure}[ht!]
\centering
\includegraphics[scale=0.9]{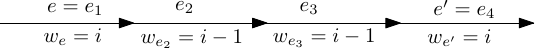}
\caption{$e'$ is the forward $i$-conflict adjacent of $e$ and 
$e$ is the backward $i$-conflict adjacent of $e'$ }
\label{fig:conflictadjacent}
\end{figure}

Consider a link $e$, the forward indicative function returns false if $e$ does not have a forward $i$-conflict adjacent neighbor and true if there exists $e'$ such $e'$ is the forward $i$-conflict adjacent of $e$. Formally,
$$
\mathds{1}_e^{+} = 
\begin{cases}
true & \mbox{ if } \exists e' \mbox{ such that $e$ and $e'$ are forward  $i$-conflict adjacent} \\
false & \mbox{otherwise }
\end{cases}
$$ 

Similarly we define for backward $i$-conflict adjacent, i.e.,  $\mathds{1}_e^{-}$.

Our approach consists of two steps. First, we present an efficient algorithm to find a link with maximum queue length such that the value of either $\mathds{1}_e^{+}$ or  $\mathds{1}_e^{-}$ is $false$. Once we identify a link $e$ such that either  $\mathds{1}_e^{+}$ or $\mathds{1}_e^{-}$ is $false$ we reduce its queue length. These two steps are repeated until there is no link $e$ with maximum queue length such that neither $\mathds{1}_e^{+}$ nor $\mathds{1}_e^{-}$ is $false$.  

A conflict tree $T^{+}_e(r)$ is a tree of forward $i$-conflict paths without cycles at round $r$ where $w_e = i$ and $w_{e'} = w_e-1$ for each $e' \in T^{+}_e(r)$ different from $e$. Similar, we define $T^{-}_e(r)$. $T^{dir}_e(r)$  is maximal if it cannot be extended in the direction $dir$.

Next, we present an algorithm that given  a link $e$ with the maximum queue length it determines if either $\mathds{1}_e^{+}$  or $\mathds{1}_e^{-}$ is $false$. Observe that the naive algorithm for determining the values $\mathds{1}_e^{+}$ and $\mathds{1}_e^{-}$ takes quadratic time as every conflict path is traversed and potentially there exist a quadratic number of conflict paths. Instead, we construct $T^{dir}_e(r)$  until $T^{+}_e(r)$ is maximal or a conflict cycle appears. 

\begin{lemma}\label{lem:adjacent}
Given $e$ and $dir$ such that $w_e$ is the largest queue length and $dir \in \{+, -\}$, there exists a linear time algorithm that determines whether $\mathds{1}_{e}^{+} = false$ or $e$ is part of an $i$-conflict cycle.
\end{lemma}

\begin{proof}
Let $i$ be the length of the longest queue length. Since $w_e$ is the largest queue length, $w_e=i$. Observe that  if there exists a link $e'$ such that $w_{e'} =i$ and  $e'\in N^+(e)$, then $\mathds{1}_{e}^{+}$ is $true$ and either $\mathds{1}_{e'}^{+}$ or $\mathds{1}_{e'}^{-}$  is $true$. However, if $w_{e'} < i$, they form an $i$-conflict path. Therefore,  we need to check the paths and potentially traverse many $i$-conflict paths which would increase the complexity. To keep track of the already visited $i$-conflict paths, we color  $(e,dir)$ with black if  it has not been visited and with red if it has been visited where  $dir \in (+, -)$.  We use a stack $Q$ to store the links with queue length $i$ that need to be visited in the $i$-conflict paths. The idea is to explore the  $i$-conflict trees  until the tree is maximal. Thus, we can determine if either $\mathds{1}_{e}^{dir}$ is $true$ or $false$. Let $T^{dir}_e(r)$ be the conflict tree rooted at $e$.  The proof is by induction on $r$.

Consider for the basic step $r=0$.  In other words $e$ is the only element in $T^{dir}_e(0)$. Suppose that $dir = +$. We color $(e, +)$ with red and if $w_{\suc(e)}  < i -1$ and $w_{\con(e)} < i-1$, then  $e$ is not conflict adjacent with neither $\suc(e)$ nor $\con(e)$. Therefore, $\mathds{1}_{e}^{+}$ is $false$. On the other hand, if either $w_{\suc(e)}  = i$ or $w_{\con(e)}  = i$, then $e$ is  $i$-conflict adjacent with either $\suc(e)$ or $\con(e)$ or both. Therefore,  $\mathds{1}_{e}^{+}$ is $true$ and if $w_{\suc(e)}  = i$, then  $\mathds{1}_{\suc(e)}^{-}$  is $true$ and if $w_{\con(e)}  = i$, then  $\mathds{1}_{\con(e)}^{+}$ is $true$. However, if either $w_{\suc(e)}  = i-1 $ or $w_{\con(e)} = i-1$ or both, then we need to explore the branches to determine whether $e$ is $i$-conflict adjacent of some link. Therefore, if $w_{\suc(e)}  = i-1$, we insert $(\suc(e), +)$ into the stack $Q$ to explore the forward $i$-conflict path. Similarly, if $w_{\con(e)}  = i-1 $, we insert $(\con(e), -)$ into the stack $Q$  to explore the backward $i$-conflict path. Analogously, we can show for  $dir = -$.

Suppose that the  inductive hypothesis holds for $T^{dir}_e(r)$ where $r>0$. Let $(e', dir)$ be the link and direction in the top of the stack $Q$ If $color(e', dir)$ is red, then from the maximality of  $T^{dir}_e(r)$  an $i$-conflict cycle has arisen. Otherwise, we continue exploring all the $i$-conflict adjacent links; refer to Figure~\ref{fig:tree}. Thus,  if $color(e', dir)$ is black we consider two cases.

\begin{figure}[ht!]
\centering
\includegraphics[scale=1]{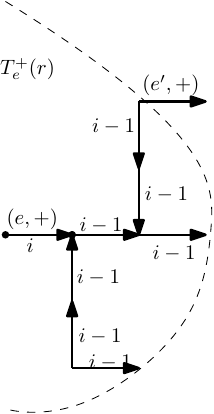}
\caption{Processing $(e', +)$ in the three $T^{+}_e(r)$.}
\label{fig:tree}
\end{figure}

\begin{compactitem}
\item $dir  = -$. Therefore, there exists an $i$-conflict path from $e$ to $e'$. We color $(e', -)$  with red. If $w_{\pre(e')}  <1-1$ and $w_{\bconf(e')} < i-1$, then  $e$ is not $i$-conflict adjacent with neither $\pre(e')$ nor $\bconf(e')$. However, if either $w_{\pre(e')}  = i$ or $w_{\bconf(e')}  = i$, then $e$ is  $i$-conflict adjacent with either $\pre(e')$ or $\bconf(e')$ or both. Therefore, $\mathds{1}_{e}^{-} = true$. If $w_{\pre(e')}  = i$,  then $\mathds{1}_{\pre(e')}^{+}$  is $true$ and if $w_{\bconf(e')}  = i$, then $\mathds{1}_{\bconf(e')}^{-}$ is $true$. However, if either $w_{\pre(e')}  = i-1$ or $w_{\bconf(e')} = i-1$ or both, then we need to explore the branches to determine whether $e$ is conflict adjacent with some link. Therefore, if $w_{\pre(e')}  = i-1$, we insert $(\pre(e'), -)$ into the stack $Q$ to explore the backward $i$-conflict path. Similarly, if $w_{\bconf(e')}  = i-1 $, we insert $(\bconf(e'), +)$ into the stack $Q$  to explore the forward $i$-conflict path.

\item $dir = +$.  The proof is similar to the previous case and is left as an exercise to the reader.

\end{compactitem}

If $Q$ is empty and there are no conflict cycles, then $\mathds{1}_{e}^{dir}$ is $false$. Observe that $T^{dir}_e(r)$ is maximal. The lemma follows.
\end{proof}

\begin{algorithm}[ht!]
  \caption{AdjacentConflict:  }
  \label{alg:conflicttree}
  \SetKwInput{InOut}{Input/Output}        
  \KwIn{$root$: such that the queue length is maximum }
        \KwIn{$dir$: $\{+, -\}$}        
        $i \gets w_{root}$\;
        Let $Q$ be a stack\;
        $push(Q, (root,dir))$\;
        \While{$Q  \not= \emptyset$ }
        {
            $(e, dir) \gets pop(Q)$\;
\tcc{A link is red if it has breen visted}
            \If{$color((e,dir)) = red$}    
            {
                \Return There is a $w_{root}$-conflict cycle\;
            }
            $color((e,  dir)) = red$\;
            \If{$dir = +$}
            {    
                \If {$w_{\suc(e)} = i$}              
                {    
                    $\mathds{1}_{\suc(e)}^{-} = true$\; 
                }
                \ElseIf {$w_{\suc(e)} = i -1 $}              
                {    
                    $push(Q , (\suc(e), +))$
                }
                \If {$w_{\con(e)} = i$}              
                {    
                    $\mathds{1}_{\con(e)}^{+} = true$\; 
                }
                \ElseIf {$w_{\con(e)} = i-1$}              
                {
                    $color(Q , (\con(e), -))$
                }        
            }            
            \Else
            {   
                \If {$w_{\pre(e)} =  i$}  
                {    
                    $\mathds{1}_{\con(e)}^{+} = true$\; 
                }
                \ElseIf {$w_{\pre(e)}  = i -1$}  
                {        
                    $push(Q , (\pre(e), -))$
                }
                \If {$w_{\bconf(e)} =  i$}  
                {    
                    $\mathds{1}_{\bconf(e)}^{-} = true$\; 

                }
                \ElseIf {$w_{\bconf(e)} = i -1$}  
                {
                     $push(Q , (\bconf(e), +))$
                }
            }

    }
    \Return $false$\;
\end{algorithm}

Algorithm~\ref{alg:conflicttree} can be used to find the value of $\mathds{1}_{e}^{dir}$  for every link with the longest queue length.

\begin{lemma}\label{lem:adjacent1}
Given a  network, we can compute  $\mathds{1}_{e}^{+}$ and $\mathds{1}_{e}^{-}$  for each link $e$  with the maximum queue length in linear time.
 \end{lemma}

\begin{proof}
Let $\mathcal{E}$ be the set of links with the maximum queue length. Initially, $(e,+)$ and $(e,-)$ are  black for all  links. For each $e \in \mathcal{E}$, if $color(e,+)$ is black, then we run Algorithm~\ref{alg:conflicttree} to determine $\mathds{1}_{e}^{+}$. Similarly, if $color(e,-)$ is black, then we run Algorithm~\ref{alg:conflicttree} to  determine $\mathds{1}_{e}^{-}$. Regarding the complexity, observe that every $i$-conflict path is traversed once. Therefore, the time complexity is linear on the number of edges. Since the graph is planar, the time complexity is $O(n)$ where $n$ is the number of intersections.
\end{proof}

We show in the next lemma how the queue length of a link  $e$ with maximum queue length such $\mathds{1}_{e}^{-} = false$ can be reduced. The main observation is that when $\mathds{1}_e^{dir}= false$, there are no $i$-conflict adjacent links of $e$ in the maximal tree $T^{dir}_e(0)$. Let $d(T^{dir}_e(0))$ be the length of the longest $i$-conflict path in $T^{dir}_e(0)$. 

\begin{theorem}\label{lem:flood}
Let $\omega(r) =   \sum_{ e  } |w_e(r) - \phi|$. There exists an algorithm that computes a green time assignment such that $\omega(r+1)  < \omega(r)$ in $O(n \log n)$ time. Further, it attains the optimal value in $O(\sigma n)$ rounds where $\sigma$ is the standard deviation of the queue lengths in the initial setting. 
\end{theorem}

\begin{proof}
Initially, $g_e(r) = 0$ for each link in $E$. Let $\mathcal{E}  = \{ e: \mathds{1}_{e}^{+} = false \mbox{ \bf or } \mathds{1}_{e}^{-} = false\}$. Consider  $e \in \mathcal{E}$ such that $w_e$ is maximum. If $\mathds{1}_{e}^{+} = false$, apply a forward flow $f^+_e$, and $g_e$  increases  and $g_{\con(e)}$ decreases by one. If $\mathds{1}_{e}^{-} = false$, apply a forward flow $f^-_e$, and $g_{\pre}$  decreases  and $g_{\con(\pre(e))}$ increases by one. Consider an incident link $e'$. 
\begin{compactitem}
\item If $e' \in \{ \suc(e), \bconf(e)\}$, $w_{e'}(r) \geq \max(w_{\suc(e')}(r), w_{\con(e')}(r))$ and  $g_{e'} =0$,  we apply a forward flow, i.e., $f^+_{e'}$. 
\item If $e' \in \{\con(e),  \pre(e)\}$, $w_{e'}(r) \geq \max(w_{\pre(e')}(r), w_{\bconf(e')}(r))$ and  $g_{\pre(e')} =0$,  we apply a forward flow, i.e., $f^-_{e'}$. 
\end{compactitem}

Inductively, the process continues until it cannot be extended. Otherwise, another link in $M$ is considered until no more progress can be done. The algorithm is presented in Algorithm~\ref{alg:flooding}. 

If $\mathcal{E}$ is empty, then the current longest queue length is minimum. Let  $\phi$ be the optimal queue length. Observe that the  maximum number of rounds is $\sum_{ e : w_e > \phi } w_e - \phi  \leq \sum_{ e : w_e > \mu }  w_e - k$  since  $\phi \geq  k$.  By Chebyshev's inequality, the maximum number of rounds is $an\sigma$  where $\sigma$  is the standard deviation and $a$ is a constant integer. Therefore, the number of rounds is $O(n \sigma )$.

\begin{algorithm}[ht!]
  \caption{Flooding}
  \label{alg:flooding}
  \SetKwInput{InOut}{Input/Output}        
      Let $Q$ be a stack\;    
      Let $\mathcal{E} = \{ e: \mathds{1}_{e}^{+} = false \mbox{ \bf or } \mathds{1}_{e}^{-} = false\}$\;
    \While{$M  \not= \emptyset$ }
    {
        Let $e \in \mathcal{E}$ such that $w_e$ is maximum\;
        \If{$\mathds{1}_{e}^{+}  =false$ }
        {    
            $push(Q , (e, +))$\;
        }
        \ElseIf{$\mathds{1}_{e}^{-}  =false$ }
        {    
            $push(Q , (e, -))$\;
        }
        
        \While{$Q  \not= \emptyset$ }
        {\label{alg:pop}
            $(e, dir) \gets pop(Q)$\;
            \If{$dir = `+`$ and $s_e = 0$ and $w_e \geq \max(w_{\suc(e)}, w_{\con(e)})$}
            {    
                $f^+_e$\;
                $push(Q , (\suc(e), `+`))$\;
                $push(Q , (\con(e), `-`))$\;
            }            
            \ElseIf{$dir = `-`$ and $s_{\pre(e)} = 0$ and $w_e \geq \max(w_{\pre(e)}, w_{\bconf(e)})$}
            {   
                $f^-_e$\;
                $push(Q , (\pre(e), `-`))$\;
                 $push(Q , (\bconf(e),`+`))$\;
            }
         }
    }    
\end{algorithm}

Regarding the complexity, we can compute $\mathcal{E}$ as follows. First, it sorts the links according to the queue length in $O(n\log n)$ time and stores them in an order set. We compute $\mathds{1}_{e}^{+}$  and  $\mathds{1}_{e}^{-}$ starting from the larger until all links have been visited. Therefore, since each link is visited once, we can compute $\mathcal{E}$ in $O(n \log n)$ time. To compute the green time assignment, we visit each link once. Therefore, the running time algorithm takes $O(n \log n)$. The theorem follows. 
\end{proof}

%




\section{Local Algorithm}\label{sec:localalgo}
In this section, we show that surprisingly the global problem can be solved locally. In other words, each queue concurrently makes a decision of whether sending a forward or backward flow based only on the queue length of the links at a distance at most two. To break symmetries, we assume that each link has a unique id. The idea is that links at the end of $i$-conflict path can be reduced safely. However, $i$-conflict adjacent links can prevent reaching the optimal queue length. We are able to reduce the queue length by keeping the flow direction when links can be part of $i$-conflict paths. 

For the proof we introduce some notation. Let $\max^+_e(r)$ denote the maximum value of the forward neighbors of $e$ and $\max^-_e(r)$ denote the maximum value of the backward neighbors of $e$. A critical link $e$ is a link with maximum queue length such that either $w_e(r) -1 = \max^+_e(r)$ or $w_e(r) -1 = \max^-_e(r)$. Let $m$ be the length of the maximum queue length. Let $e, e', e''$ be a conflict path. We say that a critical queue is rotating around its conflict cycle if either $w_{e'}(r-1) = m-1, w_{e''}(r-1) = m$, $w_e(r) = m-1, w_{e'}(r) = m, w_{e''}(r) = m-1$ and $w_e(r+1) = m, w_{e'}(r) = m-1$  (see Figure~\ref{fig:rotating} left) or $w_e(r-1) = m, w_{e'}(r-1) = m-1, w_{e}(r) = m-1, w_{e'}(r) = m, w_{e''}(r) = m-1$ and  $w_{e'}(r+1) = m-1, w_{e''}(r) = m$  (see Figure~\ref{fig:rotating} right).
 
\begin{figure}[ht!]
\centering
\includegraphics[scale=.56]{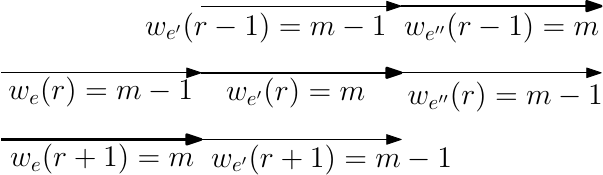} ~~~
\includegraphics[scale=.56]{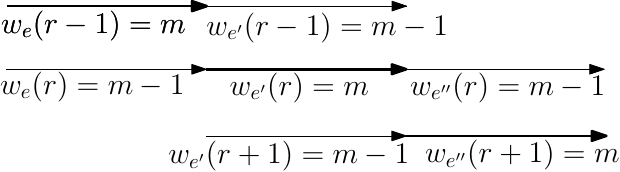}
\caption{Critical queue rotating left and right, respectively. }
\label{fig:rotating}
\end{figure}

Let $w_e  \succ^2 \max^+_e$ if  $w_e  > \max^+_e$ and either $w_e  > \max(\max^+_{\suc(e)}, \max^-_{\con(e)}) \mbox{ or }$,  $w_e  = \max(\max^+_{\suc(e)}, \max^-_{\con(e)})$ and $id_e  > id_{e'}$ for all $e' \in N^+(\suc(e)) \cup N^-(\con(e))$ where $w_{e'} = w_e$. Similarly, we define  $w_e  \succ^2 \max^-_e$.

\begin{algorithm}[ht!]
  \caption{Local Algorithm (Each link $e$ does in each round)}
  \label{alg:localalgorithm}
           \lIf{$w_e(r) -2 > \max^+_e(r) \mbox{ \bf or }    (w_e(r) -2 = \max^+_e(r) \mbox{ \bf and }  w_e(r)  \succ^2 \max^+_e(r))$ $   \mbox{ \bf or } (w_e(r) -1 = \max^+_e(r) \mbox{ \bf and }    w_e(r)  \succ^2 \max^+_e(r)    \mbox{ \bf and }  s_{(\pre(e))}(r-1) \geq 0)$}
           {
        $f^+$
    }
     \lElseIf{$w_e(r)  -2 > \max^-_e(r) \mbox{ \bf or }    (w_e(r)  -2 = \max^-_e(r) \mbox{ \bf and }  w_e(r)   \succ^2 \max^-_e(r))$ $   \mbox{ \bf or } (w_e(r)  -1 = \max-+_e(r) \mbox{ \bf and }    w_e(r)  \succ^2 \max^-_e(r)    \mbox{ \bf and }  s_{e}(r-1) \leq 0)$}
    {
        $f^-$
    }
       
\end{algorithm}

\begin{theorem}\label{thm:localalgo}
Algorithm~\ref{alg:localalgorithm} minimizes the longest queue length in at most $O(\sigma C_{max}^2)$ rounds where $\sigma$ is the standard deviation of the queue lengths in the initial setting and $C_{max}$ is the length of the longest conflict cycle with the maximum average queue length.
\end{theorem}

\begin{proof}
Let  $n_{i}(r)$ be the number of links with  rank $i$ (from maximum to minimum) and  $m_i(r)$ be the length of the queue with rank $i$ at round $r$. We show that in every step the algorithm makes progress. More specifically, we show that in every round either $n_{i}(r) < n_{i}(r+1)$  or $m_1(r+1) < m_1(r)$ or the longest queue length rotates in a consistent direction around its conflict cycles. 

Consider any link $e$ such that $w_e(r) >  \max^+_e(r)$. Let $e_i, e_{i+1}, e_{i+2}$ be the forward conflict path such that:
\begin{compactenum}
\item If either $w_{\suc(e_i)} > w_{\con(e_i)}$ or  $w_{\suc(e_i)} = w_{\con(e_i)}$ and $\max^+_{\suc(e_i)} > \max^-_{\con(e_i)}$, then $e_{i+1} = \suc(e_i)$, $e_{i+2} \in N^+(\suc(e_i))$  where $w_{e_{i+2}} = \max^+(\suc(e_i))$.

\item If either $w_{\suc(e_i)} < w_{\con(e_i)}$ or  $w_{\suc(e_i)} = w_{\con(e_i)}$ and $\max^+_{\suc(e_i)} < \max^-_{\con(e_i)}$, then $e_{i+1} = \con(e_i)$, $e_{i+2} \in N^-(\con(e_i))$  where $w_{e_{i+2}} = \max^-(\con(e_i))$.

\item If  $w_{\suc(e_i)} = w_{\con(e_i)}$ and $\max^+_{\suc(e_i)} = \max^-_{\con(e_i)}$, then $e_{i+2}$ is the link with maximum id and longest queue length, and if $e_{i+2} \in N^+(\suc(e_i))$, then $e_{i+1} = \suc(e_i)$, otherwise $e_{i+1} = \con(e_i)$ 
\end{compactenum}

Similarly, we define $e_i, e_{i-1}, e_{i-2}$ to be the backward conflict path. Consider a conflict path $e_{i-2}, e_{i-1}, e_i, e_{i+1}, e_{i+2}$ such that $w_{e_i}(r)$ is maximum, i.e., equal to $m_1(r)$ and either $w_{e_{i-1}}(r) < m_1(r)$ or $w_{e_{i+1}}(r) < m_1(r)$. Suppose without loss of generality that $e_{i-1} \in N^-(e_i)$ and $e_{i+1} \in N^+(e_i)$. Since $w_{e_i}(r)$ is maximum, no neighboring link sends a flow to $e_i$.  

\begin{compactitem}
\item $w_{e_i}(r) - 2 > w_{e_{i+1}}(r)$. Then, $e_i$  applies a forward shift, i.e.,  $f^+_{e_i}$. Therefore, $w_{e_i}(r+1) < m_1(r)$, Further, $w_{e_{i+1}}(r+1) < m_1(r)$ and
if $n_1(r) > 1$,  then $n_1(r+1) < n_1(r)$. 

\item $w_{e_i}(r) - 2 > w_{e_{i-1}}(r)$. Then, ${e_i}$  applies a backward shift, i.e., $f^-_{e_i}$. Therefore, $w_{e_i}(r+1) < m_1(r)$, Further, $w_{e_{i-1}}(r+1) < m_1(r)$ 
and if $n_1(r) > 1$,  then $n_1(r+1) < n_1(r)$. 

\item $w_{e_i}(r) - 2 = w_{e_{i+1}}(r)$.  If either $w_{e_i}(r) > w_{e_{i+2}}(r)$ or  $w_{e_i}(r) = w_{e_{i+2}}(r)$ and $id_{e_i} > id_{e_{i+2}}$, then $e_i$  applies a  forward shift, i.e., $f^+_{e_i}$. However, $e_{i+2}$ does not send a flow to $e_{i+1}$ since $id_{e_i} > id_{e_{i+2}}$. Therefore, $w_{e_{i+1}}(r+1) < m_1(r)$ and $w_{e_i}(r+1) = m_1(r)  - 1$ and if $n_1(r) > 1$,  then $n_1(r+1) < n_1(r)$. 

\item $w_{e_i}(r) - 2 = w_{e_{i-1}}(r)$.  If either $w_{e_i}(r) > w_{e_{i-2}}(r)$ or  $w_{e_i}(r) = w_{e_{i-2}}(r)$ and $id_{e_i} > id_{e_{i-2}}$, then ${e_i}$  applies a  backward shift, i.e.,  $f^-_{e_i}$. However,  $e_{i-2}$ does  not send a flow to $e_{i-1}$ since $id_{e_i} > id_{e_{i-2}}$. Therefore, $w_{e_{i-1}}(r+1) < m_1(r)$ and $w_{e_i}(r+1) = m_1(r)  - 1$ and if $n_1(r) > 1$,  then $n_1(r+1) < n_1(r)$. 

\item $w_{e_i}(r) - 1 = w_{e_{i+1}}(r)$ and $w_{e_i}(r) - 1 = w_{e_{i-1}}(r)$. Consider the previous round.  
\begin{compactitem}
\item $w_{e_{i-1}}(r-1) = m_1(r)-1$, $w_{e_i}(r-1) = m_1(r)$ and $w_{e_{i+1}}(r-1) = m_1(r)-1$. If $w_{e_i}(r)  \succ^2 \max^+_{e_i}(r)$, then $e$  applies a  forward shift. However,  $e_{i+2}$ does  not send a flow to $e_{i+1}$ since $id_{e_i} > id_{e_{i+2}}$. Otherwise, if $w_{e_i}(r)  \succ^2 \max^-_{e_i}(r)$, then $e$  applies a backward shift. However,  $e_{i-2}$ does  not send a flow to $e_{i-1}$ since $id_{e_i} > id_{e_{i-2}}$. Observe that the critical start rotating.

\item $w_{e_{i-1}}(r-1) = m_1(r)$, $w_{e_i}(r-1) = m_1(r)-1$ and $w_{e_{i+1}}(r-1) = m_1(r)-1$. Therefore, $s_{\pre(e_i)}(r-1) >0$. If $w_{e_i}(r)  \succ^2 \max^+_{e_i}(r)$, then $e_i$  applies a  forward shift  and $e_{i+2}$ does not send a flow to $e_{i+1}$. Observe that critical queue maintains a consistent rotation.

\item $w_{e_{i-1}}(r-1) = m_1(r)-1$, $w_{e_i}(r-1) = m_1(r)-1$ and $w_{e_{i+1}}(r-1) = m_1(r)$. Therefore, $s_{e_i}(r-1) < 0$. If $w_{e_i}(r)  \succ^2 \max^-_{e_i}(r)$, then $e_i$  applies a  backward shift  and $e_{i-2}$ does not send a flow to $e_{i-1}$. Observe that critical queue maintains a consistent rotation.
\end{compactitem}

\item Either, $w_{e_{i-1}}(r) = m_1(r) - 1$ and $w_{e_{i+1}}(r) = m_1(r)$ or  $w_{e_{i-1}}(r) = m_1(r)$ and $w_{e_{i+1}}(r) = m_1(r)-1$. Observe that if  $m_1(r)$ is not optimal, there must exists $e_i$ such that either $w_{e_i}(r)   \succ^2 \max^+_{e_i}(r)$ or $w_{e_i}(r)   \succ^2 \max^-_{e_i}(r)$ otherwise $n_1(r)$ is optimal. Thus, $e_i$ applies a forward or backward shift, respectively.
 
\end{compactitem}

Let $\phi$ be the optimal queue length and let $C_{max}$ the length of the longest conflict cycle. Observe that in the worst case, the longest queue length moves along  the $m_2$-conflict path before reducing its queue length by one. Since there are at most $n_{2}$ links and all the links are concurrent, it takes at most $(m_1-m_2)(n_1 + n_{2})$ rounds to reduce  $n_{1}$ links from $m_1$ to $m_2$ which results in at most  $n_{1} + n_{2}$ links with queue length $m_2$. Thus, the number of rounds is at most:

$$
\begin{array}{rcl}
(m_1- m_2)(n_{1} + n_{2}) & + &\\
(m_2 - m_3)(n_{1} +n_{2} + n_{3}) &+ & \\
 ...  &+ &\\
(m_{l-1} - m_l)(n_{1} + n_{2}  + n_{3} +...  + n_{l-1} + n_{l})  & \leq &  \\
\sum_{j =1}^{l} (m_j- m_{j+1})\sum_{i =1}^{j} n_{i} &  & 
\end{array}
$$

From Chebyshev's inequality,  $m_i - m_{i+1} \leq a \sigma$ for a constant $a$ where $\sigma$ is the standard deviation of the queue lengths of the initial  setting. Further  $n_{i} + n_{i+1}  \leq 2\max(n_{i}, n_{i+1})$ and  $l \leq \log_2 C_{max}$ since $n_{1} + n_{2}  + n_{3} +...  + n_{l-1} + n_{l} \leq C_{max}$. Therefore, $$\sum_{j =1}^{l} (m_j- m_{j+1})\sum_{i =1}^{j} n_{i}  \leq aC_{max}\sigma \sum_{i=1}^{\log_2 n} 2^i = O(\sigma C_{max}^2 ).$$ The theorem follows.
 
\end{proof}

\section{Conclusion}\label{sec:conclusion}

In this paper, we have studied the problem of determining the minimum longest queue length in torus networks.  We focus on the fundamental questions of lower bounds and upper bounds on the queue lengths when the agents move in a straight line. We present a global and a local algorithm. However, the number of rounds needed to reach the minimum remains an open problem. Another open problem is to consider asynchronous switches. This paper presents a new perspective on the problem of traffic lights. One direction is to study through simulation the effects when the model becomes probabilistic and, for example, includes turning, rates and agents that appear and disappear.

\bibliographystyle{splncs04}
\bibliography{ref}

\end{document}